\newtheorem{lemma}{\underline{\textit{Lemma}}}
\newtheorem{proposition}{\underline{\textit{Proposition}}}
\theoremstyle{definition}
\newtheorem{definition}{\underline{\textit{Definition}}}
\newtheorem{remark}{\underline{\textit{Remark}}}
\begin{document}

\title{Joint Frequency-Space Sparse Reconstruction for DOA Estimation under Coherent Sources and Amplitude-Phase Errors}
\author{Yutong Chen, Cong Zhou, Changsheng You,~\IEEEmembership{Member,~IEEE}, Shuo Shi
\thanks{Yutong Chen, Cong Zhou and Shuo Shi are with the School of Electronics and Information Engineering, Harbin Institute of Technology, Harbin, 150001, China (e-mail:  chenyutong@stu.hit.edu.cn, zhoucong@stu.hit.edu.cn, crcss@hit.edu.cn).}
\thanks{Changsheng You is with the Department of Electronic and Electrical Engineering, Southern University of Science and Technology, Shenzhen 518055, China. (e-mails: youcs@sustech.edu.cn).}
\thanks{\emph{Corresponding authors: Shuo Shi and Cong Zhou.}}}

\maketitle

\begin{abstract}
In this letter, we propose a joint frequency-space sparse reconstruction method for direction-of-arrival (DOA) estimation, which effectively addresses the issues arising from the existence of coherent sources and array amplitude-phase errors.
Specifically, by using an auxiliary source with known angles, we first construct the real steering vectors (RSVs) based on the spectral peaks of received signals in the frequency domain, which serve as a complete basis matrix for compensation for amplitude-phase errors.
Then, we leverage the spectral sparsity of snapshot data in the frequency domain and the spatial sparsity of incident directions to perform the DOA estimation according to the sparse reconstruction method.
The proposed method does not require iterative optimization, hence exhibiting low computational complexity. 
Numerical results demonstrate that the proposed DOA estimation method achieves higher estimation accuracy for coherent sources as compared to various benchmark schemes.
\end{abstract}

\begin{IEEEkeywords}
DOA estimation, amplitude-phase errors, coherent signals, sparse reconstruction. 
\end{IEEEkeywords}

\section{Introduction}
Direction-of-arrival (DOA) estimation has wide applications in wireless communications, radar systems, and localization \cite{molaei2024comprehensive}. 
Classical DOA estimation methods such as multiple signal classification (MUSIC) \cite{zhao2021phd}, maximum likelihood (ML) \cite{yang2023robust}, and subspace fitting (SF) \cite{meng2019robust} have been extensively studied and widely applied due to their high resolution and theoretical robustness under ideal conditions (e.g., non-coherent signals and perfect amplitude-phase manipulation). 
However, in practical scenarios, the above methods suffer from performance degradation due to inevitable array imperfections such as amplitude-phase errors, mutual coupling, and antenna position mismatches \cite{9397386}.

To address these issues, various calibration-based methods have been proposed, which can be divided into active and self-calibration methods.
Active-calibration methods leverage auxiliary sources with known incident directions to estimate and compensate for array imperfections. 
For example, the authors in \cite{yang2022theory} proposed an efficient method using real steering vectors (RSVs) constructed from the intermediate frequency (IF) sampled signal, accounting for antenna and channel errors. This method has been shown to achieve better performance, especially for directional antennas, while reducing computational complexity via discrete Fourier transform (DFT) techniques.
Moreover, although self-calibration methods avoid auxiliary sources, they usually have issues associated with oversimplified assumptions.
For example, the authors in \cite{wang2023self} proposed an efficient method to jointly estimate the array errors and DOAs of the sources using an alternative minimization (AM) algorithm, which can improve the robustness of DOA estimation and reduce computational complexity.
In addition, by minimizing the second moment of the phase error via blind signal separation, a non-iterative self-calibration method for large-scale planar arrays was proposed in \cite{dai2020gain}.
It achieves accurate two dimensional (2D) DOA and amplitude-phase error estimation with low computational complexity and requires only one signal source.
The above works assume that the incident sources are mutually independent.
However, correlated or coherent signals naturally exist due to the multi-path effect in wireless propagation channels.
Hence, when the above methods are applied, the rank of the covariance matrix will be deficient, resulting in severe estimation failure. 

To solve the issues of coherent signals, the authors in \cite{pan2022simplified} proposed a method based on spatial smoothing and Vandermonde structures. 
However, the spatial smoothing method inevitably reduces the effective aperture and resolution, resulting in degraded estimation performance. 
In addition, a structured tensor reconstruction method was proposed in \cite{zheng2022structured} without using spatial smoothing, for which they enforce a tensorial Hermitian Toeplitz mapping on the rank-deficient covariance tensor.
However, this method incurs extremely high computational complexity.
Recently, DOA estimation methods based on sparse reconstruction and compressive sensing are emerging, which can avoid rank deficiency and achieve good performance under coherent sources.
For instance, the authors in \cite{9127154} proposed a method that enhances signal sparsity with a designed weight vector and achieves robust DOA estimation through $ \ell_1 $-norm optimization.
However, these methods still exhibit performance degradation under amplitude-phase errors.

%--------本文贡献
In view of the above works, there still exist several limitations under coherent sources and amplitude-phase errors. To address these issues, in this letter, by leveraging the dual-sparsity in the frequency and spatial domains, we propose a real steering vector-sparse reconstruction (RSV-SR) based method for DOA estimation under narrow-band coherent sources and amplitude-phase errors.
Specifically, we first construct the RSVs using an auxiliary source to eliminate the amplitude-phase errors.
Next, we transform the received data into the frequency domain, and extract the peak points by exploiting the sparsity in the frequency domain.
Finally, we estimate the DOA via sparse reconstruction in the spatial domain. 
Numerical results demonstrate superior accuracy and resolution compared over various benchmarks under low signal-to-noise ratio (SNR) and/or a limited number of snapshots.

\vspace{-8pt}

\section{System Model}
\underline{\bf{Array model with amplitude-phase errors:}} The base station (BS) is assumed to be equipped with a uniform linear array (ULA) comprising $M$ omni-directional antennas, where the inter-antenna spacing is denoted by $d$. 
Different from existing works that assume non-coherent signals and perfect amplitude-phase measuring, we consider a practical case where there exist amplitude-phase errors typically arising from amplifier gain inconsistencies, which are independent of the DOAs of sources.

\underline{\bf{Incident signal model:}} There are $J$ narrow-band signals with zero means impinging on the ULA and the $J$ signal sources are assumed to be located in the far-field region. 
In particular, the $j$-th complex envelope signal after frequency down-conversion is denoted by $s_j(t), \, j \in \mathcal{J} \triangleq \{ 1,\cdots, J\}$, which has an IF center frequency denoted by $\omega_j$.
In addition, the $J$ signals are partially correlated, i.e., $\mathbb{E}\{s_i(t)s_j^*(t)\} \neq 0, i\neq j$. 
The correlation coefficient of $s_i(t),s_j(t),i\neq j$ is defined by % $\rho_{s_i s_j}$.
\begin{equation}
\rho_{s_i s_j} = \frac{\mathbb{E}\left[ s_i(t)s_j^\ast(t)\right]}{\sqrt{\mathbb{E}\left[ |s_i(t)|^2\right]\mathbb{E}\left[ |s_j(t)|^2\right]}}.
\end{equation}
In this letter, we consider a more challenging case where the incident signals are fully coherent, i.e., $\rho_{s_i s_j}=1$. 
Specifically, we assume that the incident signals share the same angular frequency, i.e., $\omega_1=\omega_2=\cdots=\omega_J \triangleq {\omega_{\rm s}}$.

\begin{figure}[bt]
    \centering
    \includegraphics[width=1\linewidth]{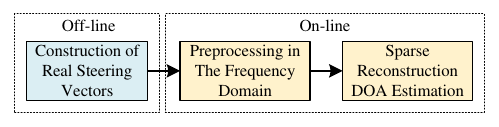}
    \caption{The framework of proposed DOA estimation method}
    % \vspace{-10pt}
    \label{fig00:enter-label}
    \vspace{-10pt}
\end{figure}

\underline{\bf{Received signal model:}} The received signal ${\bm{x}}(t) \in \mathbb{C}^{M \times 1}$ at time $t$ is given by \cite{zhou2025mixed}
\begin{equation}
    {\bm{x}}(t) = {\mathbf{\Gamma A}}{\bm{s}}(t) + {\bm{n}}(t).  
\end{equation}
Herein, ${\bf{\Gamma}} = {\rm{diag}}[1,{g_2}{e^{\jmath{\varphi _2}}}, \cdots ,{g_M}{e^{\jmath{\varphi _M}}}]$ denotes the amplitude-phase error matrix, where ${g_m}{e^{\jmath{\varphi _m}}},m \in \mathcal{M} \triangleq \{ 1,\cdots, M\}$ represents the amplitude-phase error of the $m$-th antenna, with the first antenna serving as the reference. 
${g_m}\in \mathbb{R}$ and ${\varphi _m}\in \mathbb{R}$ denote the amplitude and phase errors, respectively. 
In addition, ${\bm{s}}(t) = [{s}_1(t), {s}_2(t), \cdots, {s}_J(t)]^T$ represents the signal vector. 
${\mathbf{A}} = [{\bm{a}}({\theta_1}), {\bm{a}}({\theta_2}), \cdots, {\bm{a}}({\theta_J})] $ denotes the array steering matrix (ASM), where ${\bm{a}}({\theta _j})$ is the array response vector of the $j$-th source, given by
\begin{equation}
    {\bm{a}}({\theta_j}) = \begin{bmatrix}1,e^{\frac{\jmath2\pi d\sin {\theta_j}}{\lambda_{\rm c}} } , \cdots ,e^{\frac{\jmath2\pi(M - 1)d\sin {\theta_j}}{\lambda_{\rm c}} } \end{bmatrix}^{T}.
\end{equation}
Herein, $\theta_j$ denotes the DOA of the $j$-th source, and $\lambda_{\rm c}$ represents the wavelength of the carrier frequency.
Moreover, ${\bm{n}}(t) = {[{n_1}(t),{n_2}(t), \cdots ,{n_M}(t)]^{T}}$ is the received noise vector, where ${\bm{n}}(t) \sim \mathcal{CN}(\bm{0}, \sigma_n^2\bf{I})$ with $\sigma_n^2$ denoting the noise power. 
For simplicity, the array steering matrix can be recast as
${\bf{B}} = {\bf{\Gamma A}} = [{\bm{b}} (\theta_1 ), \cdots, {\bm{b}} (\theta_J )]$, where ${\bm{b}} (\theta_j) = {\bf{\Gamma}}{\bm{a}}(\theta_j)$. Hence, the received signal can be rewritten as
\begin{equation}
    {\bm{x}}(t) = {\bf{B}}{\bm{s}}(t) + {\bm{n}}(t).
    \label{e4}
\end{equation}

\vspace{-8pt}

\section{The Proposed DOA Estimation Method}

In this section, we first discuss the main challenges when estimating the DOA under coherent sources and amplitude-phase errors. 
Specifically, amplitude-phase errors can cause the model mismatch, breaking the Toeplitz structure of the covariance matrix and the subspace orthogonality. For self-calibration methods, the joint optimization of amplitude-phase errors and DOAs leads to a non-convex optimization problem, which is prone to convergence to a local point.
In addition, coherent sources result in rank deficiency of the covariance matrix. Decorrelation techniques such as spatial smoothing incur array aperture loss, reduce degrees of freedom (DOF), and consequently degrade the resolution of DOA estimation.

To address the above issues, we propose an RSV-SR method for DOA estimation, which consists of three phases as illustrated in Fig. \ref{fig00:enter-label}. 
In Phase 1, before DOA estimation, we construct the RSVs using an auxiliary source, which is termed as off-line processing.
The RSVs can be considered as a complete basis matrix that incorporates amplitude-phase errors.
Then, during the on-line processing, we first obtain the Fourier transform of the received signal, and then estimate the sources' DOAs based on the sparse reconstruction method, which can effectively address the rank-deficiency issues arising from coherent sources.
The main definition in this section is given as follows.
\begin{definition}
% [Definition of Fourier transform for the received signal]
   \rm{Given a received signal $\bm{x}(t) \in \mathbb{C}^{M\times 1}$ with $L$ snapshots, the measured signal at the $m$-th antenna is denoted by $\bm{x}_m = \big[[\bm{x}(1)]_m, \cdots, [\bm{x}(L)]_m\big]^T \in \mathbb{C}^{L \times 1}$, and the corresponding $L$-point DFT is defined by
   \begin{equation}
       \mathbf{X}_m \triangleq {\rm{DFT}}_L(\bm{x}_m),
   \end{equation}
   where ${\rm{DFT}}_L(\cdot)$ denotes performing an $L$-point DFT.
   Then, the Fourier transform of the received signal ${\bm{x}}(t)$ at frequency $\omega$ is defined by 
    \begin{equation}
       \mathbf{X}(\omega) \!\triangleq\! \big[{\bf{X}}_1(\omega),\mathbf{X}_2(\omega), \cdots\!, \mathbf{X}_M(\omega)\big]^T\!, \omega = 1,2,\cdots,L.
   \end{equation}
   }  
\end{definition}
\vspace{-8pt}
\subsection{Construction of RSVs}
\label{sec:RSV construction}
Before DOA estimation, we sample the spacial domain into $N$ angles denoted by ${\bar\theta _n} = -\frac{\pi}{2} + \frac{n\pi}{N}, n \in \mathcal{N} \triangleq \{1,2, \cdots ,N\} $. 
Then, we place a narrow-band auxiliary source and let it sweep across the discrete angles from ${\bar \theta _1}$ to ${\bar \theta _N}$.
Specifically, the received signal at ${\bar \theta _n}$ is given by
\begin{equation}
    {\bm{x}}(t,{\bar \theta _n}) = {\bm{b}} ({\bar \theta _n}){s_0}(t) + {\bm{n}}(t),
\end{equation}
where ${s_0}(t)$ denotes the transmit signal of the auxiliary source with an angular frequency $\omega_0$. Before obtaining the DFT of the received signal ${\bm{x}}(t,{\bar \theta _n})$, we first introduce the following lemma.
\begin{lemma}
    \rm{Given the received signal ${\bm{x}} ({t}) = {\bf{B}}{\bm{s}}({t})+{\bm{n}}({t})$ with $L$ snapshots, the DFT of the received signal $\bm{x}(t)$ at frequency $\omega$ is given by
    % \vspace{-8pt}
    \begin{equation}
        {\bf{X}} ({\omega}) = {\bf{B}}{\bf{S}}({\omega})+{\bf{N}}({\omega}),
    \end{equation}
    where ${\bf{S}}({\omega})$ and ${\bf{N}}({\omega})$ denote the DFT of the transmit signal $\bm{s}(t)$ and noise signal $\bm{n}(t)$ at frequency $\omega$, respectively.
    }
    \label{lemma:linear DFT}
\end{lemma}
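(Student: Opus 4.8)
The plan is to establish the identity by exploiting two facts: the $L$-point DFT is a \emph{linear} operation acting along the snapshot index, and the steering matrix $\mathbf{B} = \mathbf{\Gamma A}$ is \emph{constant across snapshots} (it depends only on the array geometry, the amplitude-phase errors, and the source angles, none of which vary with $t$). Because of this time-invariance, left-multiplication by $\mathbf{B}$ commutes with the DFT, which yields the stated factorization directly.

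First I would argue component-wise. Fixing an antenna index $m \in \mathcal{M}$, the $m$-th entry of the received snapshot is $[\bm{x}(t)]_m = \sum_{j=1}^{J} [\mathbf{B}]_{mj}\, s_j(t) + [\bm{n}(t)]_m$. Applying the $L$-point DFT of Definition~1 along the snapshot axis and writing out its defining sum gives
\begin{equation}
    \mathbf{X}_m(\omega) = \sum_{t=1}^{L} \Big( \sum_{j=1}^{J} [\mathbf{B}]_{mj}\, s_j(t) + [\bm{n}(t)]_m \Big) e^{-\jmath 2\pi (\omega-1)(t-1)/L}.
\end{equation}
Next I would interchange the order of the two finite summations over $t$ and $j$, which is always permissible for finite sums, and pull the snapshot-independent entries $[\mathbf{B}]_{mj}$ outside the inner transform. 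This produces
\begin{equation}
    \mathbf{X}_m(\omega) = \sum_{j=1}^{J} [\mathbf{B}]_{mj}\, \mathbf{S}_j(\omega) + \mathbf{N}_m(\omega),
\end{equation}
where $\mathbf{S}_j(\omega)$ and $\mathbf{N}_m(\omega)$ are the $L$-point DFTs of $s_j(t)$ and $[\bm{n}(t)]_m$ evaluated at frequency $\omega$.

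To finish, I would stack this scalar identity over all $m = 1, \ldots, M$ and recognize the right-hand side as a matrix--vector product, recovering $\mathbf{X}(\omega) = \mathbf{B}\,\mathbf{S}(\omega) + \mathbf{N}(\omega)$ with $\mathbf{S}(\omega) = [\mathbf{S}_1(\omega), \cdots, \mathbf{S}_J(\omega)]^T$ and $\mathbf{N}(\omega) = [\mathbf{N}_1(\omega), \cdots, \mathbf{N}_M(\omega)]^T$, which matches the vector in Definition~1.

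There is no genuine analytical obstacle here; the result is essentially a restatement of the linearity of the DFT combined with the snapshot-independence of $\mathbf{B}$. The only points requiring care are bookkeeping matters: confirming the exact DFT convention (index range and sign) used in Definition~1, applying the summation interchange consistently, and verifying that the constant matrix entries are correctly extracted from the transform. The substantive payoff of the lemma is conceptual rather than technical: it shows that the spatial mixing structure $\mathbf{B}$ is preserved \emph{verbatim} in the frequency domain, so that the subsequent frequency-domain peak extraction and spatial-domain sparse reconstruction can operate on the transformed data exactly as they would on the original time-domain model.
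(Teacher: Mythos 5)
Your proposal is correct and follows essentially the same route as the paper's Appendix~A: write the $m$-th antenna's received sample as $\sum_{j}[\bm{b}(\theta_j)]_m s_j(\ell)+[\bm{n}(\ell)]_m$, apply the $L$-point DFT, interchange the finite sums to pull the snapshot-independent steering entries outside the transform, and stack over $m$ to recover ${\bf{X}}(\omega)={\bf{B}}{\bf{S}}(\omega)+{\bf{N}}(\omega)$. The only difference is an immaterial indexing convention in the DFT exponent, which you correctly identified as a bookkeeping matter.
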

\begin{proof}
    Please refer to Appendix A.
\end{proof}

From Lemma \ref{lemma:linear DFT}, the DFT of the received signal ${\bm{x}}(t,{\bar \theta _n})$ is given by
\begin{equation}
    {\bf{X}}(\omega,{\bar \theta _n}) = {\bm{b}} ({\bar{\theta}_n}){\rm{S}_0}(\omega) + {\bf{N}}(\omega),
\end{equation}
where ${\rm{S}_0}(\omega)$ denotes the DFT of ${s_0}(t)$. Since ${s_0}(t)$ is a narrow-band signal with a central frequency $\omega_0$, the transformed ${\bf{X}}(\omega,{\bar \theta _n})$ has peaks at the frequency point $\omega = \omega_0$.
By ignoring the noise signal, the DFT of the array received signal can be approximated as 
\begin{equation}
\widetilde {\bf{X}} (\omega ,{\bar \theta _n}) = \left\{ \begin{array}{l}
{\bm{b}} ({\bar \theta _n}){\rm{S}_0}({\omega _0})\;,\;\;\omega  = {\omega _0}\\
0\;,\quad \quad \quad \quad \;\;\;\,\,\omega  \ne {\omega _0}
\end{array} \right. .
\label{e01}
\end{equation}

By collecting the frequency peaks of the DFT signals at the $N$ discrete points ${\bar \theta _n}\;,n \in \mathcal{N}$, the RSV matrix ${\bf{\Psi }} \in \mathbb{C}^{M \times N}$ is constructed by
\begin{equation}
    \begin{aligned}
        {\bf{\Psi }}  &= \left[ {\widetilde {\bf{X}} ({\omega _0},{{\bar \theta }_1}),\widetilde {\bf{X}} ({\omega _0},{{\bar \theta }_2}), \cdots ,\widetilde {\bf{X}} ({\omega _0},{{\bar \theta }_N})} \right]  \\&=\left[ { {\bm{b}} ({{\bar \theta }_1}){\rm{S}_0}({\omega _0}), \cdots ,{\bm{b}} ({{\bar \theta }_N}){\rm{S}_0}({\omega _0})} \right]  \\&=
        {\rm{S}_0}({\omega _0})\left[ {{\bm{b}} ({{\bar \theta }_1}), \cdots ,{\bm{b}} ({{\bar \theta }_N})} \right]. 
    \end{aligned}
\end{equation}
By setting ${[ {{\bm{a}}({{\bar \theta }_n})} ] _1} = 1,\;{g_1}{e^{\jmath{\varphi _1}}} = 1$, we have ${[ { {\bm{b}} ({{\bar \theta }_n})} ] _1} = 1$.
Hence, ${\bf{\Psi }}$ can be normalized as 
\begin{equation}
    \begin{aligned}
        {\bf{\Psi }}_{\rm norm}  &= \left[ {\frac{{{{\boldsymbol{\psi }}_1}}}{{{{ [{{{\boldsymbol{\psi }}_1}}] }_1}}},\frac{{{{\boldsymbol{\psi }}_2}}}{{{{  [{{{\boldsymbol{\psi }}_2}}]  }_1}}}, \cdots ,\frac{{{{\boldsymbol{\psi }}_N}}}{{{{ [{{{\boldsymbol{\psi }}_N}}]  }_1}}}} \right] \\
        &= 
        \left[ {{\bm{b}} ({{\bar \theta }_1}), \cdots ,{\bm{b}} ({{\bar \theta }_N})} \right],
    \end{aligned}
\end{equation}
where ${{\boldsymbol{\psi}}_n}$ represents the $ n $-th column of the RSV matrix ${\bf{\Psi }}, n\in \mathcal{N}$.

From the above, it is shown that the normalized RSV matrix ${\bf{\Psi }}_{\rm norm}$ is the array manifold formed by the steering vectors corresponding to the sampled angles $\{ {\bar \theta _1},{\bar \theta _2}, \ldots ,{\bar \theta _N}\} $.
Hence, it can be considered as a complete basis matrix in the sparse reconstruction theory, which can eliminate the effect of amplitude-phase errors.
In addition, $\widetilde {\bf{X}} ({\omega _0},{\bar \theta _n})$ is directly obtained from the actual sampled signal, which includes amplitude-phase errors and mutual coupling.
Hence, the RSVs can be used to mitigate the effects of amplitude-phase errors.

% \vspace{-8pt}
\subsection{Preprocessing in the Frequency Domain}
To ensure spatial sparsity, the number of incident signals is assumed to be much smaller than the number of columns in the RSV matrix ${\bf{\Psi }}_{\rm norm}$.
After obtaining the RSV matrix in Section \ref{sec:RSV construction}, based on Lemma \ref{lemma:linear DFT}, we first perform DFT on the received signal in \eqref{e4}, given by
\begin{equation}
    {\bf{X}} ({\omega}) = {\bf{B}}{\bf{S}}({\omega})+{\bf{N}}({\omega}).
    \label{eq:DFT of received signal}
\end{equation}
Similar to the construction of the RSV matrix, we extract the spectral peak in \eqref{eq:DFT of received signal} while neglecting noise, given by\footnote{For the general case $\omega_1 \neq \omega_2\neq\cdots\neq \omega_J$, our proposed method can be directly extended, which is demonstrated in Proposition \ref{proposition1}.}
\begin{equation}
    \widetilde {\bf{X}} ({\omega _{\rm s}}) = {\bf{B}}{\bf{S}}({\omega _{\rm s}}) .
\end{equation}

\begin{remark}
    In the considered scenario with co-frequency sources, incident signals become fully coherent, making the covariance matrix of sources rank-deficiency.
    As such, conventional subspace-based methods such as that in \cite{yang2022theory} cannot be applied, since the signal subspace is not completely orthogonal to the noise subspace.
    Therefore, the sparse reconstruction method is adopted in this letter to solve the rank-deficiency problem as shown in Section \ref{sec:Sparse Reconstruction DOA Estimation}.
\end{remark}

\subsection{Sparse Reconstruction DOA Estimation}
\label{sec:Sparse Reconstruction DOA Estimation}
According to sparse reconstruction theory, by using the RSV matrix ${\bf{\Psi }}_{\rm norm} $, $\widetilde {\bf{X}} ({\omega _{\rm s}})\triangleq \widetilde {\bf{X}} \in \mathbb{C}^{M\times 1}$ can be sparsely expressed by
\begin{equation}
    \begin{aligned}
        \widetilde {\bf{X}}  &= {\bf{\Psi }}_{\rm norm} {\bf{Z}} {\bf{S}}({\omega _{\rm s}}) \\
        &= {\bf{\Psi }}_{\rm norm}\widehat {\bf{S}}({\omega _{\rm s}}),
    \end{aligned}
\label{e11}
\end{equation}
where ${\mathbf{Z}} \in \mathbb{R}^{N \times J}$ is the sparse coefficient matrix, $\widehat {\bf{S}}({\omega _{\rm s}}) \triangleq \widehat {\bf{S}} \in \mathbb{C}^{N\times 1} $ is the spatial sparse vector.
In particular, the entries in ${\mathbf{Z}}$ and $\widehat {\bf{S}}$ are given by
\begin{equation}
    [{\bf{Z}}]_{n,j} = \left\{ \begin{array}{l}
    1,\;\; {\bar \theta _n} = {\theta _j}, \forall j \in \mathcal{J}, n \in \mathcal{N}\\
     0,\;\;{\bar \theta _n} \ne {\theta _j},\forall j \in \mathcal{J}, n \in \mathcal{N}
\end{array} \right. ,
\end{equation}
and
\begin{equation}
    \left\{ \begin{aligned}
        \big[\widehat{\bf{S}}\big]_n \ne 0,\;\;{\bar \theta _n} = {\theta _j},\forall j \in \mathcal{J}, n \in \mathcal{N}\\
        \big[\widehat{\bf{S}}\big]_n = 0,\;\;{\bar \theta _n} \ne {\theta _j},\forall j \in \mathcal{J}, n \in \mathcal{N}
    \end{aligned} \right. .
    \label{e12}
\end{equation}
Then, based on the sparsity of ${\widehat{\bf{S}}}$, we formulate the sparse reconstruction problem as follows.
\begin{equation}
(P1)~~\arg \;\mathop {\min }\limits_{\widehat {\bf{S}}} \;\left\| {\widetilde {\bf{X}}  -  {\bf{\Psi }}_{\rm norm}\widehat {\bf{S}}} \right\|_{\mathop{\rm F}\nolimits} ^2 + \mu {\left\| {\widehat {\bf{S}}} \right\|_0}\;,
\label{e013}
\end{equation}
where $||\cdot|{|_0}$ denotes the $ \ell_0 $-norm, which represents the number of non-zero elements in $\widehat {\bf{S}}$.
In addition, $\mu $ is the penalty factor, which is used to adjust the sparsity of the optimization variable. 
Since (P1) is a non-convex optimization problem, we slack the $ \ell_0 $-norm by the $ \ell_1 $-norm, given by
\begin{equation}
(P2)~~\arg \;\mathop {\min }\limits_{\widehat {\bf{S}}} \;\left\| {\widetilde {\bf{X}}  -  {\bf{\Psi }}_{\rm norm}\widehat {\bf{S}}} \right\|_{\mathop{\rm F}\nolimits} ^2 + \mu {\left\| {\widehat {\bf{S}}} \right\|_1}\;.
\label{e13}
\end{equation}

It is observed that (P2) is a convex optimization problem, which can be effectively solved by CVX.
By denoting the optimal solution to (P2) as $\widehat {\bf{S}}^\ast$, the estimated indices are given by
\begin{equation}
    \widehat{n}_j =~ ^J\!\arg\max~ \widehat{\bf{S}}^\ast, j = 1,2,\cdots,J.
\end{equation}
Then, the estimated DOAs of the $J$ sources are given by 
\begin{equation}
    \widehat{\theta}_j = \bar{\theta}_{\hat{n}_{j}}, \forall j \in \mathcal{J}.
\end{equation}
Next, we extend the proposed method to the general case where $\omega_1 \neq \omega_2\neq\cdots\neq \omega_J$ as follows.

% 加不同频率的
\begin{proposition}
\emph{For the general case $\omega_1 \neq \omega_2\neq\cdots\neq \omega_J$, given the DFT spectrum $\widetilde{\bf{X}}(\omega)$, there exist $J$ peaks in the DFT spectrum denoted by 
$\widetilde{\bf{X}}(\omega_j), \forall j \in \mathcal{J}$.
Then, let $\widetilde{\bf{X}} = \widetilde{\bf{X}}(\omega_1)+\widetilde{\bf{X}}(\omega_2)+\cdots+\widetilde{\bf{X}}(\omega_J)$, the DOA estimation problem can still be solved using (\ref{e13}).}
\label{proposition1}
\end{proposition}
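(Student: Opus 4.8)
The plan is to show that the distinct-frequency case reduces, after frequency-domain peak extraction and summation, to exactly the same algebraic form as the fully coherent case treated in \eqref{e11}, so that the sparse reconstruction problem \eqref{e13} carries over verbatim.

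First, I would invoke Lemma \ref{lemma:linear DFT} to write the DFT of the received signal as ${\bf{X}}(\omega) = {\bf{B}}{\bf{S}}(\omega) + {\bf{N}}(\omega)$, where the $j$-th entry of ${\bf{S}}(\omega)$ is the DFT ${\rm{S}}_j(\omega)$ of the envelope $s_j(t)$. Since each $s_j(t)$ is narrow-band with its own IF center frequency $\omega_j$ and the $\{\omega_j\}$ are mutually distinct, the spectrum ${\rm{S}}_j(\omega)$ concentrates its energy near $\omega = \omega_j$ and is negligible at the remaining peak frequencies. Consequently, at each peak frequency $\omega_j$ only the $j$-th entry of ${\bf{S}}(\omega_j)$ is significant, so that, after neglecting noise as in \eqref{e01}, the $j$-th spectral peak isolates a single steering vector:
\begin{equation}
\widetilde{\bf{X}}(\omega_j) \approx {\bm{b}}(\theta_j)\,{\rm{S}}_j(\omega_j), \quad \forall j \in \mathcal{J}.
\end{equation}
This establishes the existence of $J$ peaks, one per source, each tagged with a distinct DOA $\theta_j$.

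Next, I would sum the $J$ extracted peaks as prescribed in the statement. Because ${\bf{B}} = [{\bm{b}}(\theta_1), \cdots, {\bm{b}}(\theta_J)]$, the superposition collapses into a single matrix--vector product:
\begin{equation}
\widetilde{\bf{X}} = \sum_{j=1}^J \widetilde{\bf{X}}(\omega_j) = \sum_{j=1}^J {\bm{b}}(\theta_j)\,{\rm{S}}_j(\omega_j) = {\bf{B}}{\bm{c}},
\end{equation}
where ${\bm{c}} = [{\rm{S}}_1(\omega_1), \cdots, {\rm{S}}_J(\omega_J)]^T$. This is structurally identical to the coherent-case identity $\widetilde{\bf{X}} = {\bf{B}}{\bf{S}}(\omega_{\rm s})$ underlying \eqref{e11}; the only difference is that the coefficient vector is now assembled from distinct-frequency peak amplitudes rather than from a common-frequency snapshot. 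Since each ${\rm{S}}_j(\omega_j) \neq 0$ and the DOAs are distinct, re-expressing $\widetilde{\bf{X}}$ over the complete basis ${\bf{\Psi}}_{\rm norm}$ yields $\widetilde{\bf{X}} = {\bf{\Psi}}_{\rm norm}\widehat{\bf{S}}$ with a $J$-sparse vector $\widehat{\bf{S}}$ whose support is exactly $\{\theta_1, \cdots, \theta_J\}$, matching the sparsity pattern in \eqref{e12}. The problem \eqref{e13} therefore applies without modification and recovers the $J$ DOAs.

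The main obstacle is the peak-isolation step: one must argue carefully that the distinctness of the $\{\omega_j\}$ together with the narrow-band assumption guarantees that each spectral peak is attributable to a single source, so that summing the peaks neither merges nor cancels contributions and the recovered support remains exactly $J$-sparse. Once this separation is secured, the reduction to the coherent-case form is immediate, and the $\ell_1$ recovery guarantees established for \eqref{e13} are inherited directly.
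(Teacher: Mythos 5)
Your proof is correct and follows essentially the same route as the paper's: isolate each spectral peak $\widetilde{\bf{X}}(\omega_j)$ as a single steering vector scaled by its peak amplitude, sum the $J$ peaks, and observe that the result is a $J$-sparse combination over ${\bf{\Psi}}_{\rm norm}$ with support at the true DOAs, so that \eqref{e13} applies unchanged. If anything, you are more explicit than the paper on the peak-isolation step (distinct narrow-band spectra do not overlap, so each $\widehat{\bf{S}}(\omega_j)$ is $1$-sparse), which the paper simply asserts with ``it can be observed,'' before stacking the columns and summing exactly as you do.
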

\begin{proof}
From (\ref{e11}), we have
\begin{equation}
% \begin{array}{cc}
% \widetilde {\bm{x}} ({\omega _1}) = {\bf{\Psi }}_{norm} \cdot \widehat {\bm{s}}({\omega _1}) \\
% \widetilde {\bm{x}} ({\omega _2}) = {\bf{\Psi }}_{norm} \cdot \widehat {\bm{s}}({\omega _2}) \\
% \vdots \\
\widetilde {\bf{X}} ({\omega _j}) = {\bf{\Psi }}_{\rm norm}\widehat {\bf{S}}({\omega _j}), \forall j \in \mathcal{J}. 
% \end{array}
\end{equation}
It can be observed that $\widehat{\bf{S}}({\omega _j}) $ has a unique non-zero element when ${\bar \theta _n} = {\theta _j}$.
Then, we collect $\widetilde {\bf{X}} ({\omega _j}), \forall j \in \mathcal{J}$ into a matrix form, given by
\begin{equation}
\begin{array}{cc}
\left[\widetilde {\bf{X}} ({\omega _1}),\cdots,\widetilde {\bf{X}} ({\omega _J}) \right]
={\bf{\Psi }}_{\rm norm}\left[\widehat {\bf{S}}({\omega _1}),\cdots,\widehat {\bf{S}}({\omega _J}) \right] \\
{~~~~~~={\bf{\Psi }}_{\rm norm}\left[\begin{array}{*{20}{c}}
0  & 0 & & 0\\
\vdots  & \vdots& & \widehat {\text{S}}_J({\omega _J})\\
\widehat {\text{S}}_1({\omega _1}) & \vdots & \cdots& \vdots\\
\vdots  & \widehat {\text{S}}_2({\omega _2})& & \vdots\\
0 & 0& & 0
\end{array}
\right]. }
\end{array}\!\!
\end{equation}
As such, we have
\begin{equation}
\begin{array}{cc}
\widetilde {\bf{X}} = \sum\limits_{j=1}^{J} {\bf{X}} ({\omega_j})
{={\bf{\Psi }}_{\rm norm} }  \widehat {\bf{S}}, 
\end{array}
\end{equation}
where $\widehat {\bf{S}} = [0,\cdots, {\text{S}}_J({\omega _J}), \cdots, {\text{S}}_1({\omega _1}),\cdots, {\text{S}}_2({\omega _2}),\cdots, 0]^T$.
Thus, we have completed the proof.
\end{proof}

\begin{remark}[Computational complexity]
    The computational complexity of the proposed RSV-SR method is dominated by two parts: 1) fast Fourier transform (FFT) on the received signal in place of DFT; 2) interior point method for solving the convex optimization in \eqref{e13}. 
    Based on Lemma \ref{lemma:linear DFT}, the complexity of the FFT is $\mathcal{O}(ML \log L)$. 
    For the convex optimization, using the CVX toolbox, the dominant complexity of the interior point method mainly involves each iteration and the number of iterations required to achieve precision $ \epsilon$, which is $\mathcal{O}\left( \sqrt{N}(MN^2+M^3)\log ({1}/{\epsilon}) \right)$. 
    Hence, the overall computational complexity of the proposed RSV-SR method is $\mathcal{O}\left(ML \log L + \sqrt{N}(MN^2+M^3)\log ({1}/{\epsilon}) \right)$, which can be simplified as $\mathcal{O}\left(MN^{2.5}\log ({1}/{\epsilon}) \right)$, since $N \gg M,\;N^{2.5}\gg L$. 
\end{remark}

\vspace{-12pt}
\section{Simulation Results}
In this section, numerical results are presented to evaluate the performance of the proposed DOA estimation method.

\subsection{System Setup and Benchmark Schemes}
The system setup is presented as follows.
A ULA with $M = 8$ antennas is employed, while $J=2$ coherent signal sources are assumed to impinge on the array.
The angular domain is uniformly discretized with a fine resolution of $0.2^\circ$ with $N = 900 $.
The amplitude errors are assumed to follow a Gaussian distribution with $g_m \sim \mathcal{N}(1,\; 0.1^2)$, and the phase errors follow $\varphi_m \sim \mathcal{N}(0,\; (10^\circ)^2), \forall m \in \mathcal{M}$.
The SNR of the received signal is defined by $10\text{lg}\left(\mathbb{E}\{|{\bm{s}}(t)|^2  \}/\sigma_n^2\right)\;\text{dB}$.
In addition, the performance metric is the root mean square error (RMSE), which is defined by
\begin{equation}
    {{\mathop{\rm RMSE}\nolimits}_{\rm DOA}} = \sqrt {\frac{{\sum\limits_{k = 1}^K {\sum\limits_{j = 1}^J {{{({{\widehat \theta }_{k,j}} - {\theta _j})}^2}} } }}{{KJ}}} ,
\label{e23}
\end{equation}
where $ K $ represents the number of Monte Carlo trials, which is set by $K = 500$.
In addition, ${\widehat \theta _{k,j}}$ represents the estimated value of the $ j $-th source's angle in the $ k $-th trial, while ${\theta_j}$ denotes the true value of the $j$-th source's angle.
The angular separation of two incident signals is denoted by $\Delta \theta$. If the differences of the two DOA estimates are both less than $\Delta \theta/2$, they are considered to be successfully identified. The resolution probability of DOA estimation is defined by
% \vspace{-8pt}
% \begin{equation}
$(\text{CNT}_{\text{s}}/\text{CNT}_{\text{t}})\times 100\%$,
% \label{e_resolution_probability}
% \end{equation}
where $\text{CNT}_{\text{s}}$ denotes the number of successfully identified tests and $\text{CNT}_{\text{t}}$ represents the number of total tests \cite{lv2024dual}.
Moreover, the benchmark schemes used for performance comparison are demonstrated as follows.
\begin{itemize} %加参考文献，经典的，可以老
    \item {\bf Maximum likelihood method:}
        This scheme minimizes the negative log-likelihood function derived from observed data under Gaussian noise assumptions \cite{stoica2002maximum}. It can resolve coherent signals because it does not assume signal independence, but is computationally intensive due to multidimensional optimization.
    \item {\bf Subspace fitting method:}
        For this scheme, subspace fitting method estimates signal parameters by fitting a model subspace to the observed data subspace, minimizing a cost function. 
        The optimal weighted subspace fitting (WSF) method, using $W_{opt} = \bar{A}^2 \Lambda_s^{-1}$, achieves asymptotic efficiency, particularly for highly correlated signals \cite{viberg1991sensor}.
\end{itemize}

\vspace{-8pt}
\subsection{Performance Analysis}

\begin{figure}[t]
	\centering
	\begin{minipage}[t]{0.475\linewidth}
		\centering
		\includegraphics[width=0.95\linewidth]{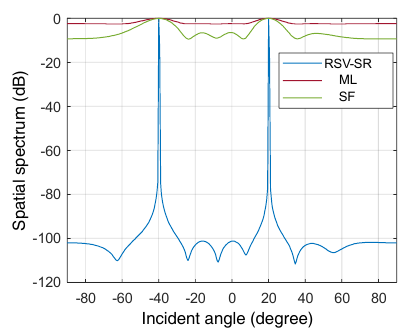}
        \captionsetup{font={footnotesize}}
		\caption{DOA spatial spectrums of the proposed RSV-SR method and benchmark schemes}
		\label{fig3:enter-label}%文中引用该图片代号
	\end{minipage}
    \hspace{0.002\textwidth} % 调整这里的数值控制间距
    \begin{minipage}[t]{0.475\linewidth}
		\centering
		\includegraphics[width=0.95\linewidth]{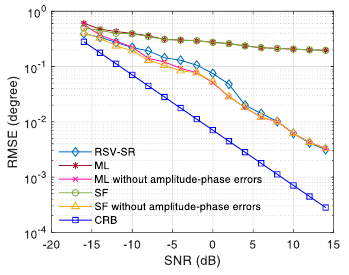}
        \captionsetup{font={footnotesize}}
		\caption{RMSE versus SNR where $L = 512$}
		\label{fig5:enter-label}%文中引用该图片代号
	\end{minipage}

    \vspace{8pt}
    % \hspace{0.002\textwidth} % 调整这里的数值控制间距
    \begin{minipage}{0.475\linewidth}
		\centering
		\includegraphics[width=0.95\linewidth]{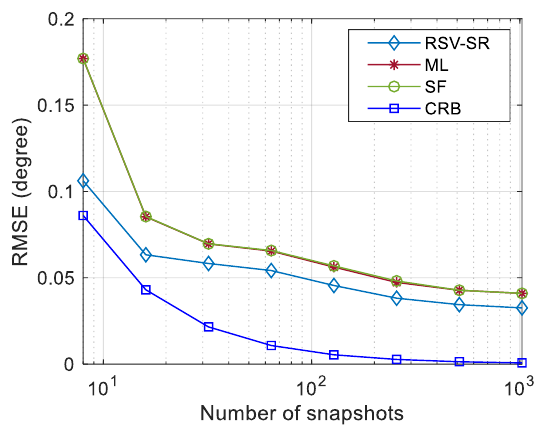}
        \captionsetup{font={footnotesize}}
		\caption{RMSE versus number of snapshots with SNR $10$ dB}
		\label{fig6:enter-label}%文中引用该图片代号
	\end{minipage}
    \hspace{0.002\textwidth} % 调整这里的数值控制间距
    \begin{minipage}{0.475\linewidth}
		\centering
		\includegraphics[width=0.95\linewidth]{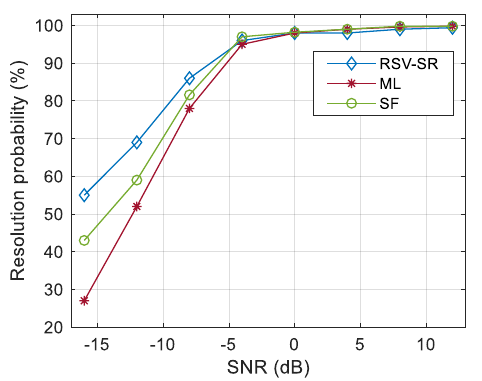}
        \captionsetup{font={footnotesize}}
		\caption{Resolution probability versus SNR}
		\label{fig7:enter-label}%文中引用该图片代号
	\end{minipage}
\vspace{-10pt}
\end{figure}

In Fig. \ref{fig3:enter-label}, we plot the spatial spectrums of DOA estimation of the proposed RSV-SR method, the ML method, and the SF method.
The two incident signal angles are set by $-40^ \circ $ and $20^ \circ $, with $L = 128$ snapshots.
In addition, the SNR is considered as $20$ dB.
It is observed from Fig. \ref{fig3:enter-label} that the proposed RSV-SR method exhibits sharper spectral peaks at the true target angles as compared to the two benchmark schemes.
This can be explained that the proposed method leverages the RSVs to construct a complete sparse reconstruction basis matrix, which effectively mitigates the amplitude-phase errors, thereby enabling higher DOA estimation accuracy for coherent sources.

We plot the RMSE versus SNR in Fig. \ref{fig5:enter-label}. We set incident signal angles as $-10^ \circ $ and $32^ \circ $ with $L = 512$ snapshots.
In addition, the RMSEs of ML and SF methods in an ideal scenario without amplitude-phase errors are denoted as ML without amplitude-phase errors and SF without amplitude-phase errors, respectively.
It is observed that the proposed RSV-SR method achieves better performance than the ML and SF methods.
Specifically, when the SNR exceeds $4$ dB, the RMSE of RSV-SR approaches that of ML without amplitude-phase errors and SF without amplitude-phase errors.
This is because the proposed RSV-SR method filter out most of the noise power when performing peak detection in the frequency domain, enhancing angle estimation accuracy. 

In Fig. \ref{fig6:enter-label}, we plot the RMSE versus number of snapshots. The incident signal angles are set as $-10^ \circ $ and $32^ \circ $, given $\text{SNR}=10\;\text{dB}$.
It is observed that the SRV-SR method achieves significantly better performance than the other benchmark schemes, especially when the number of snapshots is small.
This is because the proposed sparse reconstruction method uses the peak information in the frequency domain, and does not require a large number of time domain snapshots.

In Fig. \ref{fig7:enter-label}, we plot the resolution probability versus SNR.
We set incident signal angles as $15^ \circ $ and $20^ \circ $ with $L = 512$ snapshots. The angular separation $\Delta \theta = 5^\circ$.
It is observed that the RSV-SR method exhibits a significantly higher resolution probability at low-SNR regime.
Specifically, when the SNR is below $-8$ dB, the resolution probability of the RSV-SR method is $4.4\% $ higher than that of the SF method.
This can be intuitively understood, as when two sources are closely placed, the rank of the covariance matrix tends to be deficient.
However, the proposed RSV-SR method is independent of the covariance matrix of the received signal and only depends on the number of sampled angles. 

 \vspace{-10pt}
\section{Conclusions}
In this letter, we proposed a joint frequency-space sparse reconstruction DOA estimation method under coherent sources and amplitude-phase errors. 
First, we considered the array amplitude-phase errors and eliminated them by constructing the RSVs using an auxiliary source incident from known angles. 
Then, we collected the peaks of the spectrum of snapshot data as the received signals by leveraging the sparsity in the frequency domain.
Subsequently, by exploiting the sparsity in the spatial domain, we performed the DOA estimation using the sparse reconstruction method, which can effectively deal with coherent signals. 
Finally, numerical results were presented to demonstrate that the proposed RSV-SR DOA estimation method achieves higher accuracy and resolution while occupying less computational complexity as compared to various benchmark schemes in the low SNR regime and/or with a small number of snapshots.

% In passive array antenna direction-finding systems, the performance is often severely degraded due to signal coherence and array errors. This paper proposes a Direction of Arrival (DOA) estimation method based on the sparse reconstruction theory using the RSV matrix. By constructing the RSV matrix with auxiliary sources, the method incorporates all array errors, including amplitude and phase errors. Leveraging the narrow-band characteristics of incident signals, the method extracts peak frequency-domain data and performs spatial sparse reconstruction to estimate DOA, thereby reducing computational complexity and achieving high-efficiency and high-accuracy estimation of array amplitude-phase errors and coherent source DOAs. Simulation results demonstrate that the RSV-SR method achieves a $21.78\% $ lower RMSE in angle estimation compared to benchmark methods at an SNR of -4dB. With only 8 snapshots, the RMSE is reduced by $40.00\% $. 

\section*{Appendix A: Proof of Lemma \ref{lemma:linear DFT}}
The received signal of the $m$-th antenna at the $\ell$-th snapshot is given by
% \vspace{-10pt}
\begin{equation}
    [\bm{x}(\ell)]_m = \sum\limits_{j = 1}^J [{{\bm{b}}({\theta_j})}]_m {s_j}(\ell) + [{\bm{n}}(\ell)]_m\;,
\end{equation}
where $[{{\bm{b}}({\theta_j})}]_m$ denotes the $m$-th entry of the $j$-th steering vector ${\bm{b}}({\theta_j})$, while $[{\bm{n}}(\ell)]_m$ represents the $m$-th element of the noise vector ${\bm{n}}(\ell)$.
Then, by performing $L$-point DFT on $[\bm{x}(\ell)]_m,\ell = 1,2, \cdots, L$, we have
\begin{align}
    {\mathbf{X}_m}(\omega)& \!\!=\!\! \sum\limits_{\ell = 1}^{L} {\Big( {\sum\limits_{j = 1}^J [{{\bm{b}}({\theta_j})}]_m {s_j}(\ell) + [{\bm{n}}(\ell)]_m } \Big)} {e^{ - \jmath\frac{{2\pi }}{L}\omega\ell}} \\
    & \!\!=\!\! \sum\limits_{j = 1}^J [{{\bm{b}}({\theta_j})}]_m \sum\limits_{\ell = 1}^{L} {{s_j}(\ell){e^{ - \jmath\frac{{2\pi}}{L}\omega\ell}} \!\!+\!\! \sum\limits_{\ell = 1}^{L} {[{\bm{n}}(\ell)]_m{e^{ - \jmath\frac{{2\pi }}{L}\omega\ell}}} }\!, \notag
\end{align}
where ${\mathbf{X}_m}(\omega)$ is the DFT of $[\bm{x}(\ell)]_m$ at frequency $\omega$. Thus we have
\begin{equation}
    {\bf{X}} ({\omega}) = {\bf{B}}{\bf{S}}({\omega})+{\bf{N}}({\omega}),
\end{equation}
where ${\bf{S}}({\omega})$ and ${\bf{N}}({\omega})$ denote the DFT of the transmit signal $\bm{s}(t)$ and $\bm{n}(t)$, respectively.

\bibliographystyle{IEEEtran}
\bibliography{IEEE.bib}

% Generated by IEEEtran.bst, version: 1.14 (2015/08/26)
\begin{thebibliography}{10}
\providecommand{\url}[1]{#1}
\csname url@samestyle\endcsname
\providecommand{\newblock}{\relax}
\providecommand{\bibinfo}[2]{#2}
\providecommand{\BIBentrySTDinterwordspacing}{\spaceskip=0pt\relax}
\providecommand{\BIBentryALTinterwordstretchfactor}{4}
\providecommand{\BIBentryALTinterwordspacing}{\spaceskip=\fontdimen2\font plus
\BIBentryALTinterwordstretchfactor\fontdimen3\font minus
  \fontdimen4\font\relax}
\providecommand{\BIBforeignlanguage}[2]{{%
\expandafter\ifx\csname l@#1\endcsname\relax
\typeout{** WARNING: IEEEtran.bst: No hyphenation pattern has been}%
\typeout{** loaded for the language `#1'. Using the pattern for}%
\typeout{** the default language instead.}%
\else
\language=\csname l@#1\endcsname
\fi
#2}}
\providecommand{\BIBdecl}{\relax}
\BIBdecl

\bibitem{molaei2024comprehensive}
A.~M. Molaei, B.~Zakeri, S.~M.~H. Andargoli, M.~A.~B. Abbasi, V.~Fusco, and
  O.~Yurduseven, ``A comprehensive review of direction-of-arrival estimation
  and localization approaches in mixed-field sources scenario,'' \emph{IEEE
  Access}, vol.~12, pp. 65\,883--65\,918, 2024.

\bibitem{zhao2021phd}
J.~Zhao, R.~Gui, and X.~Dong, ``{PHD} filtering for multi-source {DOA} tracking
  with extended co-prime array: an improved {MUSIC} pseudo-likelihood,''
  \emph{IEEE Commun. Lett.}, vol.~25, no.~10, pp. 3267--3271, Oct. 2021.

\bibitem{yang2023robust}
Z.~Yang, X.~Chen, and X.~Wu, ``A robust and statistically efficient
  maximum-likelihood method for {DOA} estimation using sparse linear arrays,''
  \emph{IEEE Trans. Aerosp. Electron. Syst.}, vol.~59, no.~5, pp. 6798--6812,
  Oct. 2023.

\bibitem{meng2019robust}
D.~Meng, X.~Wang, M.~Huang, L.~Wan, and B.~Zhang, ``Robust weighted subspace
  fitting for {DOA} estimation via block sparse recovery,'' \emph{IEEE Commun.
  Lett.}, vol.~24, no.~3, pp. 563--567, Mar. 2020.

\bibitem{9397386}
K.~Wang, J.~Yi, F.~Cheng, Y.~Rao, and X.~Wan, ``Array errors and antenna
  element patterns calibration based on uniform circular array,'' \emph{IEEE
  Antennas Wireless Propag. Lett.}, vol.~20, no.~6, pp. 1063--1067, Jun 2021.

\bibitem{yang2022theory}
P.~Yang, B.~Hong, and W.~Zhou, ``Theory and experiment of array calibration via
  real steering vector for high-precision {DOA} estimation,'' \emph{IEEE
  Antennas Wireless Propag. Lett.}, vol.~21, no.~8, pp. 1678--1682, Aug. 2022.

\bibitem{wang2023self}
K.~Wang, F.~Cheng, J.~Yi, and X.~Wan, ``Self-calibration method of sensors
  array errors based on rotation measurement,'' \emph{IEEE Sens. J.}, vol.~23,
  no.~3, pp. 2311--2319, Feb. 2023.

\bibitem{dai2020gain}
Z.~Dai, W.~Su, and H.~Gu, ``A gain and phase autocalibration approach for
  large-scale planar antenna arrays,'' \emph{IEEE Commun. Lett.}, vol.~25,
  no.~5, pp. 1645--1649, May 2021.

\bibitem{pan2022simplified}
J.~Pan, M.~Sun, Y.~Wang, X.~Zhang, J.~Li, and B.~Jin, ``Simplified spatial
  smoothing for {DOA} estimation of coherent signals,'' \emph{IEEE Trans.
  Circuits Syst. II: Exp. Briefs}, vol.~70, no.~2, pp. 841--845, Feb. 2023.

\bibitem{zheng2022structured}
H.~Zheng, C.~Zhou, Z.~Shi, and Y.~Gu, ``Structured tensor reconstruction for
  coherent {DOA} estimation,'' \emph{IEEE Signal Process. Lett.}, vol.~29, pp.
  1634--1638, Jul. 2022.

\bibitem{9127154}
Y.~Fang, S.~Zhu, Y.~Gao, and C.~Zeng, ``{DOA} estimation for coherent signals
  with improved sparse representation in the presence of unknown spatially
  correlated gaussian noise,'' \emph{IEEE Trans. Veh. Technol.}, vol.~69,
  no.~9, pp. 10\,059--10\,069, Sep. 2020.

\bibitem{zhou2025mixed}
C.~Zhou, C.~You, C.~Zhou, H.~Cheng, and S.~Shi, ``Mixed near-field and
  far-field target localization for low-altitude economy,'' \emph{arXiv
  preprint arXiv:2503.04681}, Mar. 2025.

\bibitem{lv2024dual}
L.~Lv, S.~Wu, Y.~Su, C.~Jiang, and L.~Kuang, ``Dual-stream reconstruction
  network-aided {ESPRIT} algorithm for {D}o{A} estimation in coherent
  scenarios,'' \emph{IEEE Trans. Veh. Technol.}, vol.~73, no.~12, pp.
  19\,669--19\,681, Dec. 2024.

\bibitem{stoica2002maximum}
P.~Stoica and K.~C. Sharman, ``Maximum likelihood methods for
  direction-of-arrival estimation,'' \emph{IEEE Trans. Acoust., Speech, Signal
  Process.}, vol.~38, no.~7, pp. 1132--1143, Jul. 1990.

\bibitem{viberg1991sensor}
M.~Viberg and B.~Ottersten, ``Sensor array processing based on subspace
  fitting,'' \emph{IEEE Trans. Signal Process.}, vol.~39, no.~5, pp.
  1110--1121, May 1991.

\end{thebibliography}
\vfill
\end{document}